\newcommand{\ver}{}
\newcommand{\comenta}[1]{}
\title{\LARGE \bf
On the linear quadratic problem for systems with time reversed Markov jump parameters 
and the duality with filtering of Markov jump linear systems}
\author{Daniel  Gutierrez and  Eduardo F. Costa 
\thanks{This work was supported by FAPESP, CNPQ and Capes. 
		Authors are with the 
        Instituto de Ci\^encias Matem\'aticas e de Computa\c{c}\~ao,
        Universidade de S\~ao Paulo, Brasil
        {\tt\small dgutip@icmc.usp.br,\; efcosta@icmc.usp.br}}%
}
\newtheorem{defi}{Definition}[section]
\newtheorem{teo}{Theorem}[section]
\newtheorem{lema}{Lemma}[section]
\newtheorem{rema}{Remark}
\begin{document}

\maketitle
\thispagestyle{empty}
\pagestyle{empty}

\begin{abstract}
We study a class of systems whose parameters are driven by a Markov chain in reverse time. 
A recursive characterization for the second moment matrix, a spectral radius test 
for mean square stability and the formulas for optimal control are given.  
Our results are determining for the question: is it possible to extend the classical duality 
between filtering and control of linear systems (whose matrices 
are transposed in the dual problem) by simply adding the jump variable
of a Markov jump linear system. The answer is positive provided the jump process 
is reversed in time.
\end{abstract}

\section{Introduction}
In this note we study a class of systems whose parameters are driven by a 
time reversed Markov chain. Given a time horizon $\ell$ and 
a standard Markov chain $\{\eta(t), t=0,1,\ldots\}$ taking values in the 
set $\{1,2,\ldots,N\}$, we consider the process
\begin{equation}\label{eq-def-theta}
\theta(t)=\eta(\ell-t), \quad 0\leq t\leq \ell,
\end{equation}
and the system
\begin{equation}\label{eq-def-sys}
\Phi:\begin{cases}
x(t+1)&=\,\, A_{\theta(t)}x(t)+ B_{\theta(t)}u(t),\\
\quad x(0)&=\,\, x_0, \; 0\leq t\leq \ell-1,
\end{cases}
\end{equation}
where, as usual, $x$ represents the state variable of the system and 
$u$ is the control variable.
These systems may be encountered in real world problems, specially  
when a Markov chain interacts with the system parameters via 
a \emph{first in last out} queue. 
An example consists of drilling sedimentary rocks whose 
layers can be modelled by a Markov chain from bottom to top 
as a consequence of their formation process.  
The first drilled layer is the last formed one. 
Another example is a DC-motor whose brush is grind 
by a machine subject to failures, leaving a series of imprecisions on the brush 
width that can be described by a Markov chain,
so that the last failure will be the first to affect  
the motor collector depending on how the brush is installed. 

One of the most remarkable features of system $\Phi$ is that 
it provides a dual for optimal filtering of standard \emph{Markov jump linear systems} (MJLS). 
In fact, if we consider a quadratic cost functional for system $\Phi$ with 
linear state feedback, leading to an optimal control problem \cite{Riccati-Basar} that we call  
\emph{time reversed Markov jump linear quadratic} problem (TRM-JLQP), then 
we show that the solution is identical to the gains of the 
\emph{linear minimum mean square estimator} (LMMSE) formulated 
in \cite{CostaFragosoMarques05,Tuesta03}, with time-reversed gains and transposed matrices.
In perspective with existing duality relations, the one obtained here 
is a direct generalization of the well known 
relation between control and filtering of linear time varying systems 
as presented for instance in \cite[Table 6.1]{Davis84}, or also in 
\cite{Panos,Kalman-Bucy1961,Song2013} in different contexts.
\comenta{For standard linear system \cite{Panos}, this is $N=1$.  
The duality is described by a reversal of solution time and a 
transposition of system matrices. 
In the  \cite[Equation 16] {Kalman-Bucy1961} we show this notion of duality. 
In the context of delay system, we can seen duality in
\cite{Song2013}.} 
As for MJLS, the duality between control and filtering have been considered  
e.g. in \cite{Abou95,Bittanti91,Costa95,CostaFragosoMarques05,Drag08,Dragan2009,Abou2003}, 
while purely in the context of standard MJLS,
thus leading to more complex relations involving certain generalized 
coupled Riccati difference  equations. 
Here, the duality follows naturally from the simple reversion of the 
Markov chain given in \eqref{eq-def-theta}, 
with no extra assumptions nor complex constructions. 

Another interesting feature of $\Phi$ is that the variable 
$\mathbb{E}\{x(t)x(t)'\cdot \mathds{1}_{\{\theta(t)=i\}}\}$, 
which is commonly used in the literature of MJLS, \cite{Vargas-Motor,CostaFragosoMarques05,doVal99JE,Eduardo-Vargas}, 
evolves along time $t$ according to a \emph{time-varying} linear operator, 
as shown in Remark \ref{Rem_evolution_first_moment}, 
in a marked dissimilarity with standard MJLS. 
This motivated us to employ $\mathfrak{X}(k)$, the conditioned second moment of $x(k)$, 
leading to time-homogeneous operators. 
\comenta{It is worth a mention that $\Phi$ is equivalent to an MJLS if, and only if, 
the Markov chain is revertible, 
see Remark \ref{Rem_evolution_first_moment} for details.} 

The contents of this note are as follows. We present basic notation in 
Section \ref{sec-notation}. In Section \ref{sec-sys} we give the recursive 
equation describing $\mathfrak{X}$, 
which leads to a stability condition involving the spectral radius of a 
time-homogeneous linear operator. 
In Section \ref{sec-problem}, we formulate and solve the TRM-JLQ problem, 
following a proof method where we decompose $\mathfrak{X}$ 
into two components as to handle $\theta$ that are 
visited with zero probability.
The duality with the LMMSE then follows in a straightforward manner, 
as presented in Section \ref{sec-duality}. Concluding remarks 
are given in Section \ref{sec-conclusions}.

\section{Notation and the system setup}\label{sec-notation}
Let $\Re^{n}$ be the $n$-dimensional euclidean space and 
$\Re^{m,n}$ be the space formed by real matrices of dimension $m$ by $n$. 
We write $\mathcal{C}^{m,n}$ to represent the Hilbert space composed of $N$ real matrices, that is $Y=(Y_1,\ldots,Y_N)\in\mathcal{C}^{m,n}$, where 
$Y_i\in \Re^{m,n}$, $i=1,\ldots,N$. 
The space $\mathcal{C}^{n,m}$ equipped with the inner product
$\langle Y,Z\rangle=\sum_{i=1}^{N}\text{Tr}(Y_{i}'Z_{i}),$
where $\text{Tr}(\cdot)$ is the trace operator and the superscript $'$  denotes the transpose, is a Hilbert space. 
The inner product induces the norm $\| Y \|= \langle Y,Y\rangle^{1/2}$.
If $n=m$, we write simply  $\mathcal{C}^{n}$. 
The mathematical operations involving elements of $\mathcal{C}^{n,m}$,
are used in element-wise fashion, e.g. for $Y$ and $Z$ in $\mathcal{C}^{n}$ we have
$YZ=\, (Y_1Z_1,\ldots,Y_NZ_N)$,
where $Y_iZ_i$ is the usual matrix multiplication.
Similarly, for a set of scalars  
$\alpha=(\alpha_1,\ldots,\alpha_N)\in \mathcal{C}^{1}$ we write
$\alpha Y=\,\,\, (\alpha_1 Y_1,\ldots, \alpha_NY_N)$. 

Regarding the system setup, it is assumed throughout the paper that
$x_0\in\Re^{n}$ is a random variable with zero mean satisfying $\mathbb E\{x_0x_0'\}=\Delta$.
We have $x(t)\in\Re^{n}$ and $u(t)\in\Re^{m}$.\comenta{$y(t)\in\Re^{s}$, $t=0,\ldots, \ell$.}
The system matrices belong to given sets  
$A,E \in\mathcal{C}^{n}$, $B\in\mathcal{C}^{n,m}$, 
$C\in\mathcal{C}^{s,n}$ and $D\in\mathcal{C}^{s,m}$ with
$C'_iD_i=0$ and $D'_iD_i>0$ for each $i=1,\ldots,N$.  
We write $\pi_i(t)=\text{Pr}(\theta(t)=i)$, where $\text{Pr}(\cdot)$  is the probability measure; $\pi(t)$ is considered as an element of $\mathcal{C}^1$, 
that is, $\pi(t)=(\pi_1(t),\ldots,\pi_N(t))$. 
$\pi_i$ stands for the limiting distribution of the Markov chain $\eta$ 
when it exists, in such a manner that $\pi_i=\lim_{\ell\rightarrow \infty}\pi_i(0)$. 
Also, we denote by $\mathcal{P}=[p_{ij}]$, $i,j=1,\ldots,N$
the transition probability matrix of the Markov chain $\eta$, so that for any $t=1,\ldots, \ell$, 
$$\begin{aligned}
&\text{Pr}(\theta(t-1)=j \,|\,\theta(t)=i)=\\
&\;= \text{Pr}(\eta(\ell-t+1)=j \,|\,\eta(\ell-t)=i)
= p_{ij}.
\end{aligned}$$
No additional assumption is made on the Markov chain, yielding a 
rather general setup that includes periodic chains, 
important for the duality relation given in Remark \ref{rema-extending-to-time-varying}.  
We shall deal with linear operators
$\mathcal{U}_{Z},\mathcal{V}_{Z},\mathcal{D}\,:\,\mathcal{C}^{n} \to \mathcal{C}^{n}$. 
We write the i-th element of $\mathcal{U}_{Z}(Y)$ by $\mathcal{U}_{Z,i}(Y)$
and similarly for the other operators. For each $i=1,\ldots,N$, we define:
\begin{equation}\label{eq-operators}
\begin{aligned}
\mathcal{U}_{Z,i}(Y)&=\,\,\,\displaystyle\sum_{j=1}^{N}p_{ij}Z_jY_jZ'_j\\
\mathcal{V}_{Z,i}(Y)&=\,\,\,Z'_i\mathcal{D}_i(Y)Z_i,\\
\,\,\,\mathcal{D}_i(Y)&=\,\,\,\displaystyle\sum_{j=1}^{N}p_{ji}Y_j.
\end{aligned}
\end{equation}
\section{Properties of system $\Phi$}\label{sec-sys}
Let $\mathbb{E}\{\cdot·\}$ be the expected value of a random
variable. We consider the \emph{conditioned second moment} of $x(t)$ defined by 
\begin{equation}\label{eq-def-cond-second-moment}
\mathfrak{X}_{i}(t)=\mathbb{E}\{x(t)x(t)'\,|\,\theta(t)=i\},\quad t=0,1,\ldots.
\end{equation}
\begin{lema} \label{lem-def-second-moment} Consider the system $\Phi$ with $u(t)=0$ for each $t$. 
The conditioned second moment $\mathfrak{X}(t)\in\mathcal{C}^{n}$ 
is given by $\mathfrak{X}(0)=\,(\Delta,\ldots,\Delta)$ and
\begin{equation} \label{eq_dynamical_x_second}
\mathfrak{X}(t+1)=\,\mathcal{U}_{A}(\mathfrak{X}(t)),\quad t=0,1,\ldots\ell-1.
\end{equation}
\end{lema}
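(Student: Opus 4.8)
The plan is to compute $\mathfrak{X}_i(t+1)$ directly from its definition by conditioning on $\theta(t)$, using the dynamics $x(t+1)=A_{\theta(t)}x(t)$ (recall $u\equiv 0$) together with the transition law for $\theta$ stated in Section~\ref{sec-notation}. First I would establish the base case: since $x(0)=x_0$ is independent of $\theta(0)$ with $\mathbb{E}\{x_0x_0'\}=\Delta$, we get $\mathfrak{X}_i(0)=\mathbb{E}\{x_0x_0'\mid\theta(0)=i\}=\Delta$ for every $i$ with $\pi_i(0)>0$, and the convention $\mathfrak{X}_i(0)=\Delta$ handles the remaining indices, so $\mathfrak{X}(0)=(\Delta,\ldots,\Delta)$.

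For the recursion, fix $i$ and $t$, and condition on the value of $\theta(t)$ inside the event $\{\theta(t+1)=i\}$:
\begin{equation}\label{eq-proof-plan-cond}
\mathfrak{X}_i(t+1)=\mathbb{E}\{x(t+1)x(t+1)'\mid\theta(t+1)=i\}
=\sum_{j=1}^{N}\mathbb{E}\{A_{\theta(t)}x(t)x(t)'A_{\theta(t)}'\mid\theta(t)=j,\theta(t+1)=i\}\,\mathrm{Pr}(\theta(t)=j\mid\theta(t+1)=i).
\end{equation}
On the event $\{\theta(t)=j\}$ we have $A_{\theta(t)}=A_j$, which pulls out of the expectation, leaving $A_j\,\mathbb{E}\{x(t)x(t)'\mid\theta(t)=j,\theta(t+1)=i\}\,A_j'$. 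The key probabilistic step is then the \emph{Markov property of $\theta$ in forward time}: because $\theta(t)=\eta(\ell-t)$ is itself a Markov chain (a time-reindexed copy of $\eta$), the conditional law of $x(t)$, which is a function of $x_0$ and $\theta(0),\ldots,\theta(t)$, depends on the future $\theta(t+1)$ only through $\theta(t)$; hence $\mathbb{E}\{x(t)x(t)'\mid\theta(t)=j,\theta(t+1)=i\}=\mathbb{E}\{x(t)x(t)'\mid\theta(t)=j\}=\mathfrak{X}_j(t)$. Finally, $\mathrm{Pr}(\theta(t)=j\mid\theta(t+1)=i)=\mathrm{Pr}(\eta(\ell-t-1)=j\mid\eta(\ell-t)=i)=p_{ij}$ by the identity displayed just before \eqref{eq-operators}. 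Substituting into \eqref{eq-proof-plan-cond} gives $\mathfrak{X}_i(t+1)=\sum_{j=1}^{N}p_{ij}A_j\mathfrak{X}_j(t)A_j'=\mathcal{U}_{A,i}(\mathfrak{X}(t))$, which is exactly \eqref{eq_dynamical_x_second}.

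The main obstacle I anticipate is the bookkeeping around indices $i$ visited with probability zero: if $\pi_i(t+1)=0$ the conditional expectation defining $\mathfrak{X}_i(t+1)$ is not literally defined, and one must argue that the stated recursion still holds by convention (or by a limiting/consistency argument), so that $\mathfrak{X}(t)$ is a well-defined element of $\mathcal{C}^n$ for all $t$ and the operator $\mathcal{U}_A$ is applied unambiguously. A clean way around this is to first derive the \emph{unconditioned} identity $\pi_i(t+1)\mathfrak{X}_i(t+1)=\sum_j p_{ij}\pi_j(t)A_j\mathfrak{X}_j(t)A_j'$ — equivalently the evolution of $\mathbb{E}\{x(t)x(t)'\mathds{1}_{\{\theta(t)=i\}}\}$ alluded to in Remark~\ref{Rem_evolution_first_moment} — which holds with no division, and then divide by $\pi_i(t+1)$ wherever it is positive while invoking the convention elsewhere; this also makes transparent why the naturally conditioned object obeys a \emph{time-homogeneous} operator whereas the unconditioned one does not.
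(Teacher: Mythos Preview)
Your approach is essentially the same as the paper's: both condition on $\theta(t)$ inside the event $\{\theta(t+1)=i\}$, translate to the forward chain $\eta$ to invoke its Markov property, and read off the transition weight $p_{ij}$; the paper simply packages the conditional-independence step as an identity $\mathbb{E}\{\Gamma(\theta(0),\ldots,\theta(t))\,\mathds{1}_{\theta(t)=j}\mid\theta(t+1)=i\}=\mathbb{E}\{\Gamma\mid\theta(t)=j\}\,p_{ij}$ for generic $\Gamma$, while you phrase it as ``$x(t)$ depends on $\theta(t+1)$ only through $\theta(t)$''. One small slip: in your translation you wrote $\mathrm{Pr}(\eta(\ell-t-1)=j\mid\eta(\ell-t)=i)$, but since $\theta(t)=\eta(\ell-t)$ and $\theta(t+1)=\eta(\ell-t-1)$ it should read $\mathrm{Pr}(\eta(\ell-t)=j\mid\eta(\ell-t-1)=i)=p_{ij}$; your conclusion $p_{ij}$ is nonetheless correct, and your explicit treatment of indices with $\pi_i(t{+}1)=0$ is a welcome addition that the paper leaves implicit.
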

\begin{proof} For a fixed, arbitrary $i\in\{0,\ldots,N\}$, note that
$$\mathfrak{X}_i(0)=\mathbb{E}\{x_0x'_0\,|\,\theta(0)=i\}=\mathbb{E}\{x_0x'_0\}=\Delta.$$
\comenta{For $t\geq 1$, we use  (\ref{eq-def-control}) in  $\Phi$ to get that $x$
evolves along the ``\emph{closed-loop system}'', as follows:
$x(t+1)=A_{\theta(t)}(t)x(t)$,
where $A_i(t)$ is described in (\ref{eq_At_bt}).}
From \eqref{eq-def-sys}, \eqref{eq-def-cond-second-moment} and the total probability law we obtain:
\begin{equation}\label{eq_ayuda_1}
\begin{aligned}
&\mathfrak{X}_i(t+1)=\mathbb{E}\{A_{\theta(t)}x(t)x(t)'A_{\theta(t)}'\,|\, \theta(t+1)=i\}\\
=&\sum_{j=1}^{N}\mathbb{E}\{A_{\theta(t)}x(t)x(t)'A_{\theta(t)}'\cdot \mathds{1}_{\theta(t)=j}\,|\, \theta(t+1)=i\}.
\end{aligned}
\end{equation}
In order to compute the right hand side of \eqref{eq_ayuda_1}, 
we need the following standard Markov chain property: for any
function $\Gamma:\theta(0),\ldots, \theta(t)\rightarrow \Re^{n,n}$ we have
$$\begin{aligned}
&\mathbb{E}\{\Gamma(\theta(0),\ldots, \theta(t)) \cdot \mathds{1}_{\theta(t)=j}\, |\,\theta(t+1)=i\}
\\&=\mathbb{E}\{\Gamma(\eta(\ell),\ldots, \eta(\ell-t)) \cdot \mathds{1}_{\eta(\ell-t)=j}\, |\,\eta(\ell-t-1)=i\}
\\&=\mathbb{E}\{\Gamma(\eta(\ell),\ldots, \eta(\ell-t)) | \eta(\ell-t)=j,\eta(\ell-t-1)=i\}
\\&\;\; \cdot \text{Prob}(\eta(\ell-t)=j\, |\,\eta(\ell-t-1)=i\}
\\&=\mathbb{E}\{\Gamma(\eta(\ell),\ldots, \eta(\ell-t)) | \eta(\ell-t)=j\} p_{ij}
\\& = \mathbb{E}\{\Gamma(\theta(0),\ldots, \theta(t))\,|\,\theta(t)=j \}  p_{ij},
\end{aligned}$$
then by replacing $\Gamma$ with $A_{\theta}(t)x(t)x(t)'A_{\theta(t)}'$
and applying the above in (\ref{eq_ayuda_1}) yields
\begin{equation}
\mathfrak{X}_i(t+1)=\sum_{j=1}^{N}p_{ij}A_j\mathfrak{X}_j(t)A_j'=\mathcal{U}_{A,i}(\mathfrak{X}(t)),\nonumber
\end{equation}
which completes the proof. \end{proof}
%
%
\begin{rema}\label{Rem_evolution_first_moment}
Let  $W(t)\in\mathcal{C}^{n}$ , $t=0,\ldots,\ell$ be given by
\begin{equation}\label{eq-def-W}
W_i(t)=\mathbb{E}\{x(t)x(t)'\cdot \mathds{1}_{\theta(t)=i}\}.
\end{equation}
This variable is commonly encountered in the majority of papers dealing with (standard) MJLS.
However, calculations similar to that in Lemma \ref{lem-def-second-moment} lead to
\begin{equation}\label{eq-simu-first-moment}
\begin{aligned}
&W_i(t+1)=\sum_{j=1}^{N}p_{ij}\frac{\pi_i(t+1)}{\pi_j(t)}A_jW_j(t)A_j'.
\end{aligned}
\end{equation}
Note that the Markov chain measure appears explicitly, leading to a time-varying
mapping from $W(t)$ to $W(t+1)$. The only exception is when the Markov chain is \emph{reversible}, in which case
the facts that $\pi_jp_{ji}=\pi_i p_{ij}$ and that the Markov chain starts with 
the invariant measure (by definition) yield
$$p_{ij}\frac{\pi_i(t+1)}{\pi_j(t)}=p_{ij}\frac{\pi_i}{\pi_j}=p_{ji},$$ 
in which case $W$ evolves exactly as in a standard MJLS. 
\comenta{for any $t$, hence eliminating the time 
dependence in \eqref{eq-simu-first-moment}.}
\comenta{XXX - acho que estah errado. Another remarkable fact is that, in a duality with the above, 
$\mathfrak X$ leads to time varying operators in the context of MJLS.}
\end{rema}
The following notion  
is adapted from \cite[Chapter 3]{CostaFragosoMarques05}. 
\begin{defi} We say that the system $\Phi$ with $u(t)=0$ is \emph{mean square stable} (MS-stable), whenever 
$$\lim_{\ell\to\infty} \mathbb{E}\left\{\|x(\ell)\|^2\right\}=0.$$
\end{defi}
This is equivalent to say that the variable 
$\mathfrak{X}(\ell)$ converges to zero as $\ell$ goes to infinity, leading 
to the following result. \comenta{in a straightforward manner.}
\begin{teo} The system $\Phi$ with $u(t)=0$ is MS-stable if and only if the spectral radius of $\mathcal{U}_A$ is smaller than one.
\end{teo}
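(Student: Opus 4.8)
The plan is to iterate Lemma~\ref{lem-def-second-moment} so as to write $\mathfrak{X}(\ell)=\mathcal{U}_A^{\ell}(\mathfrak{X}(0))$ with $\mathfrak{X}(0)=(\Delta,\ldots,\Delta)$, to express $\mathbb{E}\{\|x(\ell)\|^2\}$ in terms of $\mathfrak{X}(\ell)$, and then to read off MS-stability from the standard fact that the iterates $\mathcal{U}_A^{\ell}$ vanish exactly when $r(\mathcal{U}_A)<1$. The bridge identity follows from the total probability law, just as in the proof of Lemma~\ref{lem-def-second-moment}: $\mathbb{E}\{x(\ell)x(\ell)'\}=\sum_{i}\pi_i(\ell)\mathfrak{X}_i(\ell)$ (each summand being the variable $W_i(\ell)$ of Remark~\ref{Rem_evolution_first_moment}, in particular zero when $\pi_i(\ell)=0$), so taking traces $\mathbb{E}\{\|x(\ell)\|^2\}=\sum_{i=1}^{N}\pi_i(\ell)\,\mathrm{Tr}(\mathfrak{X}_i(\ell))$. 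Since $\mathcal{U}_A$ is a positive operator (clear from \eqref{eq-operators}) and $\mathfrak{X}(0)\ge 0$, induction gives $\mathfrak{X}(\ell)\ge 0$, hence $\mathrm{Tr}(\mathfrak{X}_i(\ell))\ge 0$ and $\|\mathfrak{X}(\ell)\|\le\sum_{i}\mathrm{Tr}(\mathfrak{X}_i(\ell))\le\sqrt{nN}\,\|\mathfrak{X}(\ell)\|$, while $\pi_i(\ell)\in[0,1]$.

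For the ``if'' part, $r(\mathcal{U}_A)<1$ yields $\|\mathcal{U}_A^{\ell}\|\to 0$ (Gelfand's formula, or the Jordan form of $\mathcal{U}_A$), so $\mathfrak{X}(\ell)=\mathcal{U}_A^{\ell}(\mathfrak{X}(0))\to 0$ and, by the bridge identity, $\mathbb{E}\{\|x(\ell)\|^2\}\le\sum_i\mathrm{Tr}(\mathfrak{X}_i(\ell))\to 0$; that is, $\Phi$ is MS-stable. For the ``only if'' part I would argue by contraposition, taking $\Delta$ positive definite (which is implicit in the statement, since $\Delta=0$ would make $\Phi$ trivially MS-stable). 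If $r(\mathcal{U}_A)\ge 1$, then, $\mathcal{U}_A$ being positive on $\mathcal{C}^{n}$ ordered by the semidefinite cone, its spectral radius is an eigenvalue with a semidefinite eigentuple $V\ge 0$, $V\ne 0$ (Perron--Frobenius for positive operators, see \cite{CostaFragosoMarques05}); since $\Delta>0$ there is $\varepsilon>0$ with $(\Delta,\ldots,\Delta)\ge\varepsilon V$, and applying the monotone map $\mathcal{U}_A^{\ell}$ gives $\mathfrak{X}(\ell)\ge\varepsilon\, r(\mathcal{U}_A)^{\ell}\,V\ge\varepsilon V$. Picking any $i_0$ with $V_{i_0}\ne 0$ and $\pi_{i_0}(\ell)>0$ --- and noting that $\pi_i(\ell)=\mathrm{Pr}(\theta(\ell)=i)=\mathrm{Pr}(\eta(0)=i)$ does not depend on $\ell$ --- we obtain $\mathbb{E}\{\|x(\ell)\|^2\}\ge\varepsilon\,\pi_{i_0}(\ell)\,\mathrm{Tr}(V_{i_0})>0$ for every $\ell$, so $\Phi$ is not MS-stable.

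The ``only if'' direction is where I expect the difficulty, precisely because the bridge identity weights $\mathfrak{X}_i(\ell)$ by $\pi_i(\ell)$: growth of $\mathfrak{X}(\ell)$ living on modes with $\pi_i(\ell)=0$, or a degenerate $\Delta$, would be invisible to $\mathbb{E}\{\|x(\ell)\|^2\}$. Positivity of $\mathcal{U}_A$ together with $\Delta>0$ (and the initial law of $\eta$ charging the mode $i_0$, which holds e.g.\ when $\eta$ is irreducible or is started from a full-support distribution) is exactly what moves the growth onto an observed mode; in full generality one would instead lean on the splitting of $\mathfrak{X}$ into its ``visited'' and ``unvisited'' components that is developed in Section~\ref{sec-problem}. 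The ancillary facts used --- that $\mathcal{U}_A$ is positive and monotone, that its spectral radius is an eigenvalue with a semidefinite eigentuple, and that $\|\mathcal{U}_A^{\ell}\|\to 0$ iff $r(\mathcal{U}_A)<1$ --- are classical and may be quoted from \cite{CostaFragosoMarques05}.
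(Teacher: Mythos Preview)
Your approach coincides with the paper's: the sentence preceding the theorem asserts that MS-stability is equivalent to $\mathfrak{X}(\ell)\to 0$, and the spectral test is then read off from Lemma~\ref{lem-def-second-moment} and the standard fact $\mathcal{U}_A^{\ell}\to 0\Leftrightarrow r(\mathcal{U}_A)<1$. The paper offers no further detail, so your write-up is strictly more complete, and in particular you have correctly isolated the only nontrivial step.

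The difficulty you flag in the ``only if'' direction is genuine and is not addressed by the paper either. With the definition exactly as printed (a single initial law for $\eta$ and a single $\Delta$), the equivalence ``MS-stable $\Leftrightarrow \mathfrak{X}(\ell)\to 0$'' can fail: take $N=2$, two absorbing states $p_{11}=p_{22}=1$, $A_1=0$, $A_2=2I$, and $\eta(0)=1$ a.s.; then $\theta\equiv 1$, $x(\ell)=0$ for all $\ell\ge 1$, yet $\mathfrak{X}_2(\ell)=4^{\ell}\Delta$ and $r(\mathcal{U}_A)=4$. The bridge identity hides this growth because $\pi_2(\ell)=\Pr(\eta(0)=2)=0$, exactly as you anticipated. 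What closes the gap is the stronger notion the authors cite from \cite[Chapter~3]{CostaFragosoMarques05}, namely MS-stability \emph{for every} initial distribution of $\eta$ and every $\Delta\ge 0$. Under that reading your Perron--Frobenius argument goes through directly: given a nonzero semidefinite eigentuple $V$ pick $i_0$ with $V_{i_0}\ne 0$, start $\eta$ at $i_0$ so that $\pi_{i_0}(\ell)=1$, and take $\Delta=I$; no detour through the $\mathfrak{X}^{\mathrm N}/\mathfrak{X}^{\mathrm P}$ splitting of Section~\ref{sec-problem} is needed.
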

%

\section{The TRM-JLQ problem}\label{sec-problem}
Let the output variable $y$ given by $y(\ell)=E_{\theta(\ell)}x(\ell)$ and 
\begin{equation}\nonumber
y(t)=
C_{\theta(t)}x(t)+ D_{\theta(t)}u(t),\quad t=0,\ldots,\ell-1.
\end{equation}
The TRM-JLQ consists of minimizing the mean square of $y$ with $\ell$ stages, 
as usual in jump linear quadratic problems, 
\comenta{A terminal 
cost is also included for generality} 
\begin{equation}\label{eq-functional-cost}
\min_{u(0),\ldots,u(\ell-1)} \mathbb{E}\left\{\sum_{t=0}^{\ell} \|y(t)\|^{2}\right\}.
\end{equation}
Regarding the information structure of the problem, we assume that $\theta(t)$
is available to the controller, that the control is in linear state feedback form, 
\begin{equation}\label{eq-def-control}
u(t)=K_{\theta(t)}(t)x(t), \quad t=0,1,\ldots\ell-1,
\end{equation}
where $K(t)\in\mathcal{C}^{m,n}$ is the decision variable, 
and that one should be able to compute the sequence $K(0),\ldots,K(\ell-1)$ 
prior to the system operation, that is, $K(t)$ is not a function of the 
observations $(x(s),\theta(s))$, $0\leq s\leq \ell$.
The conditioned second moment $\mathfrak{X}$ for the closed loop 
system is of much help in obtaining the solution.
The recursive formula for $\mathfrak{X}$ follows by a direct adaptation of 
Lemma \ref{lem-def-second-moment}, by replacing $A\in\mathcal{C}^{n}$ with 
its closed loop version 
\begin{equation}\label{eq_At_bt}
A_{i}(t)=A_i+B_iK_i(t).
\end{equation}
\begin{lema}\label{lem-def-second-moment-controlled-version}
The conditioned second moment $\mathfrak{X}(t)\in\mathcal{C}^{n}$ is given by 
$\mathfrak{X}(0)=(\Delta,\ldots,\Delta)$ and 
\begin{equation} \label{eq-controlled-second-moment}
\mathfrak{X}(t+1)=\mathcal{U}_{A(t)}(\mathfrak{X}(t)),\quad t=0,1,\ldots,\ell-1.
\end{equation}
\end{lema}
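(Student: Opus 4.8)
The plan is to repeat the argument of Lemma~\ref{lem-def-second-moment} almost verbatim, replacing the open-loop matrices $A$ with the closed-loop matrices $A(t)$ of \eqref{eq_At_bt}. First I would substitute the feedback law \eqref{eq-def-control} into \eqref{eq-def-sys}, obtaining the closed-loop recursion $x(t+1)=A_{\theta(t)}(t)x(t)$ with $A_i(t)=A_i+B_iK_i(t)$. The initial condition is unchanged: since $x_0$ has zero mean and covariance $\Delta$ and is independent of the chain, $\mathfrak{X}_i(0)=\mathbb{E}\{x_0x_0'\,|\,\theta(0)=i\}=\mathbb{E}\{x_0x_0'\}=\Delta$ for every $i$, hence $\mathfrak{X}(0)=(\Delta,\ldots,\Delta)$.

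For the recursion, I would iterate the closed-loop equation to see that $x(t)$ is a deterministic function of $x_0$ and of $\theta(0),\ldots,\theta(t-1)$ alone, the prescribed gains $K(0),\ldots,K(\ell-1)$ contributing no extra randomness. Therefore $A_{\theta(t)}(t)\,x(t)x(t)'\,A_{\theta(t)}(t)'$ is a function of $x_0$ and $\theta(0),\ldots,\theta(t)$. Expanding by total probability,
\[
\mathfrak{X}_i(t+1)=\sum_{j=1}^{N}\mathbb{E}\{A_{\theta(t)}(t)x(t)x(t)'A_{\theta(t)}(t)'\cdot\mathds{1}_{\theta(t)=j}\,|\,\theta(t+1)=i\},
\]
integrating out $x_0$, and then invoking termwise the time-reversed Markov chain identity established inside the proof of Lemma~\ref{lem-def-second-moment} --- valid for an arbitrary function $\Gamma$ of $\theta(0),\ldots,\theta(t)$ --- I would obtain
\[
\mathfrak{X}_i(t+1)=\sum_{j=1}^{N}p_{ij}A_j(t)\,\mathfrak{X}_j(t)\,A_j(t)'=\mathcal{U}_{A(t),i}(\mathfrak{X}(t)),
\]
which is \eqref{eq-controlled-second-moment}.

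The one point requiring care --- and the mild obstacle I anticipate --- is to check that the Markov chain identity from Lemma~\ref{lem-def-second-moment} still applies when the matrix multiplying $x(t)$ carries the time index $t$: one must confirm that $A_{\theta(t)}(t)$ depends on the chain only through the current state $\theta(t)$, its remaining dependence being on the deterministic sequence $K(\cdot)$, so that after conditioning on $x_0$ the quantity inside the conditional expectation is genuinely of the form $\Gamma(\theta(0),\ldots,\theta(t))$. Once this bookkeeping is settled the computation is word-for-word that of Lemma~\ref{lem-def-second-moment}, and, as there, no extra assumption on the Markov chain is needed.
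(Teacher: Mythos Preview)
Your proposal is correct and matches the paper's approach exactly: the paper does not give a separate proof for this lemma but simply states that it ``follows by a direct adaptation of Lemma~\ref{lem-def-second-moment}, by replacing $A\in\mathcal{C}^n$ with its closed loop version'' $A_i(t)=A_i+B_iK_i(t)$. Your only addition is the explicit bookkeeping that the gain sequence $K(\cdot)$ is deterministic (not a function of the observations), so that the integrand remains of the form $\Gamma(\theta(0),\ldots,\theta(t))$ and the Markov chain identity still applies --- which is precisely the point the paper leaves implicit.
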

In what follows, for brevity we denote 
\begin{equation*}
\begin{aligned}
\mathcal Q(\ell)&=\pi(\ell)E'E,
\\ \mathcal{Q}(t)&=
C'C+K(t)'D'DK(t),\quad t=0,\ldots,\ell-1.
\end{aligned}
\end{equation*}
\begin{lema}\label{lem_cost_function} The TRM-JLQ problem can be formulated as
\begin{equation}\label{eq-functional-cost-modified}
\begin{aligned}
\min_{K(0),\ldots,K(\ell-1)}&\left\{\sum_{t=0}^{\ell} \langle\pi(t)\mathcal{Q}(t),\mathfrak{X}(t)\rangle\right\}. 
\end{aligned}
\end{equation}

\comenta{\begin{equation}\label{eq-functional-cost-modified}
\begin{aligned}
\min_{K(0),\ldots,K(\ell-1)}&\sum_{t=0}^{\ell-1} \langle\pi(t)(C'C+K(t)'D'DK(t)),\mathfrak{X}(t)\rangle \\
&\qquad + \langle \pi(\ell)E'E,\mathfrak{X}(\ell)\rangle.
\end{aligned}
\end{equation}}
\end{lema}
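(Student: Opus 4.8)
The plan is to rewrite $\mathbb{E}\{\|y(t)\|^{2}\}$ stagewise as $\langle\pi(t)\mathcal{Q}(t),\mathfrak{X}(t)\rangle$ and then sum over $t$. Since the feedback form \eqref{eq-def-control} is imposed as part of the information structure, the decision variable in \eqref{eq-functional-cost} may be taken to be $K(0),\ldots,K(\ell-1)$ in place of $u(0),\ldots,u(\ell-1)$; what remains is to match the two cost expressions term by term, using the conditioned second moment $\mathfrak{X}$ of the \emph{closed-loop} system from Lemma \ref{lem-def-second-moment-controlled-version}.

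First I would substitute \eqref{eq-def-control} into the output equation. For $t\le\ell-1$ this gives $y(t)=\big(C_{\theta(t)}+D_{\theta(t)}K_{\theta(t)}(t)\big)x(t)$, hence
\[
\|y(t)\|^{2}=x(t)'\big(C_{\theta(t)}+D_{\theta(t)}K_{\theta(t)}(t)\big)'\big(C_{\theta(t)}+D_{\theta(t)}K_{\theta(t)}(t)\big)x(t).
\]
The standing assumption $C_i'D_i=0$ eliminates the cross terms, leaving $\|y(t)\|^{2}=x(t)'\mathcal{Q}_{\theta(t)}(t)\,x(t)=\text{Tr}\big(\mathcal{Q}_{\theta(t)}(t)\,x(t)x(t)'\big)$, with $\mathcal{Q}(t)$ exactly as defined just before the statement.

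Next, I would condition on $\theta(t)$ through the total probability law; since $\mathcal{Q}_i(t)$ is deterministic once $\theta(t)=i$ is fixed,
\[
\mathbb{E}\{\|y(t)\|^{2}\}=\sum_{i=1}^{N}\pi_i(t)\,\text{Tr}\big(\mathcal{Q}_i(t)\,\mathbb{E}\{x(t)x(t)'\mid\theta(t)=i\}\big)=\sum_{i=1}^{N}\pi_i(t)\,\text{Tr}\big(\mathcal{Q}_i(t)\mathfrak{X}_i(t)\big),
\]
where the last equality is the definition \eqref{eq-def-cond-second-moment} applied to the closed-loop state. By symmetry of $\mathcal{Q}_i(t)$ and the definition of the inner product on $\mathcal{C}^{n}$, this is precisely $\langle\pi(t)\mathcal{Q}(t),\mathfrak{X}(t)\rangle$. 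The terminal stage $t=\ell$ is handled identically, starting from $y(\ell)=E_{\theta(\ell)}x(\ell)$, and produces the $t=\ell$ term of \eqref{eq-functional-cost-modified}. Summing over $t=0,\ldots,\ell$ and taking the minimum over the admissible $K$ yields the stated reformulation.

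I do not anticipate a genuine obstacle here: the proof is a direct computation. The two points that need care are (i) invoking $C_i'D_i=0$ so that $\|y(t)\|^{2}$ collapses to the single quadratic form $x(t)'\mathcal{Q}_{\theta(t)}(t)x(t)$ with no cross term, and (ii) passing the expectation correctly through the conditioning on $\theta(t)$, so that $\mathbb{E}\{x(t)x(t)'\mid\theta(t)=i\}$ is identified with $\mathfrak{X}_i(t)$ of the closed-loop system rather than of the uncontrolled one.
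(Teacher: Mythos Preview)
Your proposal is correct and follows essentially the same route as the paper: compute $\mathbb{E}\{\|y(t)\|^{2}\}$ stagewise by conditioning on $\theta(t)$, express each term as a trace against $\mathfrak{X}_i(t)$, and identify the result with the inner product on $\mathcal{C}^{n}$. The paper treats the terminal stage first and then says the running stages follow ``by a calculation similar as above,'' whereas you spell out the running case (including the use of $C_i'D_i=0$ to kill the cross terms) and defer the terminal one; this is merely a difference in presentation.
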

\begin{proof} 
The mean square  of the terminal cost, $y(\ell)$ is:  
\begin{equation}\label{eq-v-l}
\begin{aligned}
&\mathbb{E}\{\|y(\ell)\|^2\}=\sum_{i=1}^{N} \mathbb{E}\{x(\ell)'E'_{\theta(\ell)}E_{\theta(\ell)}x(\ell)\cdot \mathds{1}_{\theta(\ell)=i}\}\\
&=\sum_{i=1}^{N}\text{Tr}(\pi_i(\ell)E'_{i}E_{i}\mathfrak{X}_i(\ell)
= \langle \pi(\ell)E'E,\mathfrak{X}(\ell)\rangle.
\end{aligned}
\end{equation}
Now, by a calculation similar as above leads to 
\begin{equation}
\mathbb{E}\{\|y(t)\|^2\}=\langle \pi(t)(C'C+K(t)'D'DK(t)),\mathfrak{X}(t)\rangle.\label{eq-v-t} 
\end{equation}
Substituting \eqref{eq-v-l} and  \eqref{eq-v-t}  into \eqref{eq-functional-cost} we obtain \eqref{eq-functional-cost-modified}.
\end{proof}
Let us denote the gains attaining \eqref{eq-functional-cost-modified} 
by $K^{\text{op}}(t)$. From a dynamic programming standpoint, we introduce 
value functions $V^t:\mathcal{C}^n \rightarrow \Re$ by: $V^{\ell}=\langle \pi(\ell)\mathcal{Q}(\ell),\mathfrak{X}(\ell)\rangle$
\comenta{$V^{\ell}=\langle \pi(\ell)E'E,\mathfrak{X}(\ell)\rangle$} and for $t=\ell-1,\ell-2,\ldots,0$, 
\begin{equation}\label{eq-value-function}
\begin{aligned}
\displaystyle V^t(\mathfrak X) 
& = \min_{K(t),\ldots,K(\ell-1)}\left\{\sum_{\tau=t}^{\ell} \langle\pi(\tau)\mathcal{Q}(\tau),\mathfrak{X}(\tau)\rangle\right\}, 
\end{aligned}
\end{equation}
\comenta{\begin{equation}\label{eq-value-function}
\begin{aligned}
\displaystyle V^t(\mathfrak X) 
& = \min_{K(t),\ldots,K(\ell-1)}\sum_{\tau=t}^{\ell-1} \langle\pi(\tau)(C'C 
\\& \quad +K(\tau)'D'DK(\tau)),\mathfrak{X}(\tau)\rangle 
+ \langle \pi(\ell)E'E,\mathfrak{X}(\ell)\rangle,
\end{aligned}
\end{equation}}
where $\mathfrak X(t)=\mathfrak X$ and  
$\mathfrak X(\tau)$, $\tau=t+1,\ldots,\ell$, satisfies \eqref{eq-controlled-second-moment}.
\begin{teo} \label{teo_optimal_gains} 
Define $P(t)\in \mathcal{C}^n$ and $M(t)\in \mathcal{C}^{m,n}$, $t=0,\ldots,\ell-1$, as follows. 
Let $P(\ell)=\pi(\ell)E'E$ and for each  $t=\ell-1,\ldots,0$ and $i=1,\ldots,N$, compute:   
if $\pi_i(t)=0$, 
$$M_i(t)=0\quad\text{and}\quad P_i(t)=0,$$
else (if $\pi_i(t)>0$), 
\begin{align}
R_i(t)&=(B'_{i} \mathcal{D}_{i}(P(t+1)) B_{i} + \pi_i(t)D'_iD_i), \nonumber
\\ M_{i}(t)&=R_i(t)^{-1}B'_{i} \mathcal{D}_{i}(P(t+1)) A_{i},  \label{eq-gains}
\\ P_i(t)&=\pi_i(t)C'_iC_i + A'_i\mathcal{D}_i(P(t+1))A_i \nonumber
\\ &\; -A'_i\mathcal{D}_i(P(t+1)) B_i R_i(t)^{-1}B'_i\mathcal{D}_i(P(t+1))A_i. \label{eq-ric-L}
\end{align}
\comenta{
$$R(t)=(B'_{i} \mathcal{D}_{i}(P(t+1)) B_{i} + \pi_i(t)D'_iD_i)$$
and 
\begin{equation}\label{eq-gains}
\begin{aligned}
M_{i}(t)&=R(t)^{-1}B'_{i} \mathcal{D}_{i}(P(t+1)) A_{i},
\end{aligned}
\end{equation}
\begin{equation}\label{eq-ric-L}
\begin{aligned}
&P_i(t)=\pi_i(t)C'_iC_i + A'_i\mathcal{D}_i(P(t+1))A_i\\
&-A'_i\mathcal{D}_i(P(t+1))B_i R(t)^{-1}B'_i\mathcal{D}_i(P(t+1))A_i.
\end{aligned}
\end{equation}
}
Then,  
\begin{equation}\label{relation_P_X}
V^{t}(\mathfrak X)=\langle P(t),\mathfrak{X}\rangle 
\end{equation}
and $K^{\text{op}}(t)=M(t)$, $t=0,\ldots,\ell$. 
\end{teo}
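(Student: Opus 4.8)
The plan is to establish \eqref{relation_P_X} together with $K^{\mathrm{op}}(t)=M(t)$ by backward induction on $t$, running the dynamic programming recursion \eqref{eq-value-function}, and carrying along the auxiliary invariant that each $P_i(t)$ is positive semidefinite. The base case $t=\ell$ is immediate from the definitions of $V^\ell$ and of $P(\ell)=\pi(\ell)E'E\succeq 0$. For the inductive step, assume $V^{t+1}(\mathfrak X)=\langle P(t+1),\mathfrak X\rangle$ with $P(t+1)\succeq 0$. Bellman's principle of optimality applied to \eqref{eq-value-function}, together with Lemma \ref{lem-def-second-moment-controlled-version} (which gives $\mathfrak X(t+1)=\mathcal U_{A(t)}(\mathfrak X)$ with $A_i(t)=A_i+B_iK_i(t)$ as in \eqref{eq_At_bt}), yields
\begin{equation*}
V^t(\mathfrak X)=\min_{K(t)}\Big\{\langle\pi(t)\mathcal Q(t),\mathfrak X\rangle+\langle P(t+1),\mathcal U_{A(t)}(\mathfrak X)\rangle\Big\}.
\end{equation*}
I would then move $P(t+1)$ through the adjoint of $\mathcal U$: directly from \eqref{eq-operators} and the symmetry of $P$ one checks $\langle P,\mathcal U_Z(Y)\rangle=\langle\mathcal V_Z(P),Y\rangle=\sum_i\mathrm{Tr}\big(Z_i'\mathcal D_i(P)Z_i\,Y_i\big)$, so with $Z=A(t)$ the bracket becomes $\sum_i\mathrm{Tr}\big(S_i(K_i(t))\,\mathfrak X_i\big)$ with $S_i(K)=\pi_i(t)\big(C_i'C_i+K'D_i'D_iK\big)+(A_i+B_iK)'\mathcal D_i(P(t+1))(A_i+B_iK)$, and the minimization decouples across the modes $i=1,\dots,N$.

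For a mode $i$ with $\pi_i(t)>0$ the quantity $\mathfrak X_i=\mathfrak X_i(t)$ is a genuine conditional second moment, hence $\succeq 0$. Since $D_i'D_i>0$, $\pi_i(t)>0$ and $\mathcal D_i(P(t+1))\succeq 0$ (by the induction hypothesis, $\mathcal D_i$ being a nonnegative combination of the $P_j(t+1)$), the matrix $R_i(t)$ is positive definite, so a completion of squares in $K$ writes $S_i(K)$ as a perfect square in $K-M_i(t)$ (with $M_i(t)$ as in \eqref{eq-gains}) plus the remainder $P_i(t)$ given by \eqref{eq-ric-L}. As $\mathfrak X_i\succeq 0$, the $i$-th summand is then minimized at $K_i(t)=M_i(t)$ with minimum value $\mathrm{Tr}(P_i(t)\mathfrak X_i)$; and since $S_i(K)\succeq 0$ for every $K$, also $P_i(t)\succeq 0$, which propagates the invariant.

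The \emph{delicate point}, and where the announced decomposition of $\mathfrak X$ into two components enters, is the set of modes with $\pi_i(t)=0$: there $\mathfrak X_i(t)$ is not even defined by \eqref{eq-def-cond-second-moment}, since $\{\theta(t)=i\}$ is a null event. The plan is to split $\mathfrak X$ into its restrictions $\mathfrak X^{+}$ and $\mathfrak X^{0}$ to the index sets $\{i:\pi_i(t)>0\}$ and $\{i:\pi_i(t)=0\}$ and to show that $V^t$ ignores $\mathfrak X^{0}$ altogether. The stage term drops those indices because of the factor $\pi_i(t)=0$; for the propagated term one uses $\pi_i(t)=\sum_k\pi_k(t+1)p_{ki}=0$, which forces $p_{ki}=0$ for every $k$ with $\pi_k(t+1)>0$, while $P_j(t+1)=0$ whenever $\pi_j(t+1)=0$ by the convention in the statement; hence $\mathcal D_i(P(t+1))=\sum_j p_{ji}P_j(t+1)=0$, so the $i$-th summand $\mathrm{Tr}\big((A_i+B_iK_i(t))'\mathcal D_i(P(t+1))(A_i+B_iK_i(t))\,\mathfrak X_i\big)$ vanishes whatever value is formally assigned to $\mathfrak X_i$ and whatever $K_i(t)$, so that the choices $M_i(t)=P_i(t)=0$ are consistent both with \eqref{eq-gains}--\eqref{eq-ric-L} and with $\langle P(t),\cdot\rangle$ not seeing $\mathfrak X^{0}$. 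Combining the two cases gives $V^t(\mathfrak X)=\langle P(t),\mathfrak X\rangle$ with minimizer $K^{\mathrm{op}}(t)=M(t)$ and $P(t)\succeq 0$, completing the induction; specializing to $t=0$ and $\mathfrak X=(\Delta,\dots,\Delta)$ and invoking Lemma \ref{lem_cost_function} then solves the TRM-JLQ problem. The main obstacle is precisely this zero-probability bookkeeping: one must check simultaneously that the Riccati recursion stays well posed on those modes (it does, since $\mathcal D_i(P(t+1))=0$ there) and that such a mode never propagates backwards into a mode of positive probability, which is the reachability fact on the support of $\pi(\cdot)$ just used.
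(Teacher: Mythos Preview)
Your proposal is correct and follows essentially the same route as the paper: backward dynamic programming, passage to the adjoint $\mathcal V$ of $\mathcal U$, decomposition of $\mathfrak X$ according to whether $\pi_i(t)>0$ or $\pi_i(t)=0$, the reachability observation $\pi_i(t)=\sum_j\pi_j(t+1)p_{ji}=0\Rightarrow\mathcal D_i(P(t+1))=0$, and completion of squares on the positive-probability block. Your version is in fact slightly more explicit than the paper's in two places that the paper leaves implicit: you carry the invariant $P(t)\succeq 0$ (so that $R_i(t)\succ 0$ is guaranteed and the inverse in \eqref{eq-gains} exists), and you note that $\mathfrak X_i\succeq 0$ on positive-probability modes (so that the squared term in the completion actually contributes nonnegatively and $K_i(t)=M_i(t)$ is a genuine minimizer).
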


\begin{proof} 
We apply the dynamic programming approach for the costs defined in \eqref{eq-value-function} and 
the system in \eqref{eq-controlled-second-moment}, 
whose state is the variable $\mathfrak{X}:=\mathfrak X(t)$. It can be checked that 
$\mathcal{U}$ is the adjoint operator of $\mathcal{V}$, and consequently  
\begin{equation}\label{eq-recursivity}
\begin{aligned}
&V^{t}(\mathfrak X)=\min_{K(t)}\,\,\langle\pi(t)\mathcal Q(t), \mathfrak{X}\rangle 
+ \langle P(t+1),\mathcal{U}_{A(t)}(\mathfrak{X})\rangle,
\\& =\min_{K(t)}\,\,\langle\pi(t)\mathcal Q(t)
+ A(t)'\mathcal{D}(P(t+1))A(t),\mathfrak{X}\rangle.
\end{aligned}
\end{equation}
Let us decompose $\mathfrak{X}$ as follows; let the set of states 
$\theta$ having zero probability of being visited at time $t$ be denoted by
$$\mathcal{N}_t=\{i:\pi_i(t)=0\}.$$
We write  $\mathfrak{X}=\mathfrak{X}^\text{N} + \mathfrak{X}^\text{P}$,
where $\mathfrak{X}^\text{N}$ is such that $\mathfrak{X}_i^\text{N}=0$ for any $i\notin\mathcal{N}_t$, 
and in a similar fashion $\mathfrak{X}_i^\text{P}=0$ for $i\in\mathcal{N}_t$. 
We now show that the term 
$$\langle \pi(t)\mathcal Q(t) 
+ A(t)'\mathcal{D}(P(t+1))A(t), \mathfrak{X}^\text{N} \rangle$$  
is zero irrespectively of $K(t)$.  
First, note that for $i\in\mathcal{N}_t$ we have $\pi_i(t)\mathcal Q_i(t)\mathfrak{X}_i^\text{N}=0$. 
Second, for $i\in\mathcal{N}_t$ one can check that $\pi_j(t+1)=0$ for all $j$ such that 
$p_{ji}>0$, so that $P_j(t+1)=0$. This yields $\mathcal{D}_i(P(t+1))=0$. 
Bringing these facts together and recalling that by construction 
$\mathfrak{X}_i^\text{N}=0$ for all $i\notin\mathcal{N}_t$, we evaluate 
\begin{equation}\label{eq-trivial-term}
\langle \pi(t)\mathcal Q(t) + A(t)'\mathcal{D}(P(t+1))A(t), \mathfrak{X}^\text{N} \rangle=0.
\end{equation}
By substituting \eqref{eq-trivial-term}
into \eqref{eq-recursivity} we write
\begin{equation}\label{eq-recursivity-2}
\begin{aligned}
V^{t}(\mathfrak X)&=\min_{K(t)}\,\,\langle \pi(t)\mathcal Q(t) 
+ A(t)'\mathcal{D}(P(t+1))A(t), \mathfrak{X}^\text{P} \rangle.
\\ &=\min_{K(t)} \sum_{\{i\notin\mathcal{N}_t\}}\text{Tr}\big((\pi_i(t)\mathcal Q_i(t) 
\\ &\qquad  + A_i(t)'\mathcal{D}_i(P(t+1))A_i(t))\mathfrak{X}_i^\text{P}\big).
\end{aligned}
\end{equation}
By expanding some terms and after some algebra to complete the squares, we have
\begin{equation}\label{eq-aux}
\begin{aligned}
V^{t}(\mathfrak{X})&=
\min_{K(t)}\sum_{\{i\notin\mathcal{N}_t\}}\text{Tr}\big((\pi_i(t)C'_iC_i+ A'_i\mathcal{D}_i(P(t+1))A_i\\
+&(K_i(t)-O_i(t))' (B'_i\mathcal{D}_i(P(t+1))B_i+\pi_i(t)D'_iD_i)\\
\cdot&(K_i(t)-O_i(t))- O_i(t)' (B'_i\mathcal{D}_i(P(t+1))B_i
\\ &\quad\quad\quad\quad\quad\quad+\pi_i(t)D'_iD_i) O_i(t))\mathfrak{X}_i^{P}\big), 
\end{aligned}
\end{equation}
where $O_i(t)=M_i(t)$ as given in \eqref{eq-gains}.
This makes clear that the minimal cost is achieved by setting 
$K^\text{op}_i(t)=M_i(t)$, $i\notin\mathcal{N}_t$. 
Now, by replacing $K_i(t)$ with $K^\text{op}_i(t)$ in \eqref{eq-aux}, 
\begin{equation*}
\begin{aligned}
V^{t}(\mathfrak{X})&=\sum_{\{i\notin\mathcal{N}_t\}} \text{Tr}\big( P_i(t)\mathfrak{X}_i^{P}\big)
\end{aligned}
\end{equation*}
with $P_i(t)$ as given in \eqref{eq-ric-L}, $i\notin\mathcal{N}_t$. Finally, by 
choosing $P_i(t)=0$, $i\in\mathcal{N}_t$, we write
\begin{equation*}
\begin{aligned}
V^{t}(\mathfrak{X})&=\sum_{\{i\notin\mathcal{N}_t\}} \text{Tr}\big( P_i(t)\mathfrak{X}_i^{P}\big) 
+ \sum_{\{i\in\mathcal{N}_t\}} \text{Tr}\big( P_i(t)\mathfrak{X}_i^{N}\big), 
\\& = \langle P(t),\mathfrak{X}^P\rangle 
+ \langle P(t),\mathfrak{X}^N\rangle  
=\langle P(t),\mathfrak{X}\rangle,
\end{aligned}
\end{equation*}
which completes the proof.
\end{proof}
\comenta{\begin{rema}
We have to set $P_i(t)=0$ when $\pi_{i}(t)=0$. To see this, 
consider $\mathfrak{X}=\mathfrak{X}^\text{N}$, that is, 
pick an $\mathfrak{X}$ such that $\mathfrak{X}^\text{P}=0$. 
In this case, from \eqref{eq-recursivity} and \eqref{eq-trivial-term}
we evaluate $V^{t}(\mathfrak X)=0$, so that for \eqref{relation_P_X} 
to be valid we need to set $P_i(t)$ as indicated. 
This is in contrast with the gains: \eqref{eq-recursivity-2} makes clear that 
the problem is irrespective of $K_i(t)$ for any $i:\pi_{i}(t)=0$.
\end{rema}}

\section{The duality between the TRM-JLQ and the LMMSE for standard MJLS}\label{sec-duality}
We consider the LMMSE for standard MJLS as presented in \cite{Costa94,Tuesta03}. The problem 
consists of finding the sequence of sets of gains $K^{\text{f}}(t)$, $t=0,\ldots,\ell$, that minimizes 
the covariance of the estimation error $\tilde z(t)=\hat z(t) - z(t)$ 
when the estimate is given by a Luenberger observer in the form 
$$\hat z(t+1)=A_{\eta(t+1)}\hat z(t)+K^{\text{f}}_{\eta(t+1)}(t)(y(t)-L_{\eta(t)}\hat z(t)),$$
where $y(t)$ is the output of the MJLS
\begin{equation}
\begin{cases}
z(t+1)&=\,F_{\eta(t+1)}z(t)+G_{\eta(t+1)}\omega(t)\\
\quad y(t)&=\,L_{\eta(t+1)}z(t)+H_{\eta(t+1)}\omega(t)\\
\quad z(0)&=\,z_0,
\end{cases}
\end{equation}
and $\omega(t)$ and $z_0$ are i.i.d. random variables
satisfying  $\mathbb{E}\{\omega(t)\}=0$, $\mathbb{E}\{\omega(t)\omega(t)'\}=I$ 
and $\mathbb{E}\{z_0z_0'\}=\Sigma$. Moreover,
it is assumed that $L_iH'_i=0$ and $H_iH'_i>0$. 
We write $\upsilon_i(t)=\text{Prob}(\eta(t)=i)$, $i=1,\ldots,N$, so that 
it is the time-reverse of $\pi$, $\upsilon(t)=\pi(\ell-t)$.  
Note that we are considering the same problem as in \cite{Costa94,Tuesta03}, though 
our notation is slightly different: 
here we assume that $(y(t),\eta(t+1))$ is available for the filter to 
obtain  $\hat z(t+1)$ and the system matrices are indexed by $\eta(t+1)$, 
while in the standard formulation $(y(t),\eta(t))$ are observed at 
time $t$ and the system matrices are indexed by $\eta(t)$. 
This ``time shifting'' in $\eta$ avoids a cluttering in the duality relation.
Along the same line, instead of writing the filter gains as a  function of the variable 
$Y_i(t)=\mathbb E\{\tilde z(t)\tilde z(t)'\cdot \mathds{1}_{\{\eta(t+1)=i\}}\}$, 
given by the coupled Riccati difference equation \cite[Equation 24]{Tuesta03}
\begin{equation*}
\begin{aligned}
Y_i(t+1)&=\sum_{j=1}^{N} p_{ji} \big\{F_jY_j(t)F'_j+\upsilon_j(t)G_jG'_j- F_jY_j(t)L'_j 
\\ & \quad \cdot \big(L_j Y_j(t) L_j' +\upsilon_j(t)H_jH_j'\big)^{-1} L_jY_j(t)F'_j \big\} 
\end{aligned}
\end{equation*}
whenever $\upsilon_i(t)>0$ and $Y_i(t+1)=0$ otherwise,
in this note we use the variable $S_i(t)=\mathbb E\{\tilde z(t)\tilde z(t)'\cdot \mathds{1}_{\{\eta(t)=i\}}\}$ defined in \cite{Costa95}, 
leading to $Y(t)=\mathcal{D}(S(t)).$ Replacing this in the above equation, after some 
algebraic manipulation one obtains \cite[Equation 8] {Costa95}:
\begin{equation}\label{CAREF2}
\begin{aligned}
S_i(t+1)&=\upsilon_i(t)G_iG'_i + F_i\mathcal{D}_i(S(t))F'_i - F_i\mathcal{D}_i(S(t))L'_i\\
&\cdot ( L_i  F_i(S(t)) L'_i+\upsilon_i(t)H_iH'_i)^{-1}
L_i\mathcal{D}_i(S(t))F'_i,
\end{aligned}
\end{equation}
whenever $\upsilon_i(t)>0$ and $S_i(t+1)=0$ otherwise, with initial condition
$S_i(0) =\mathbb E\{\tilde z(0)\tilde z(0)'\cdot \mathds{1}_{\{\eta(0)=i\}}\}=\upsilon_i(0)\Sigma$. 
The optimal gains are given for $t=0,\ldots,\ell$ by
\begin{equation}\label{eq-filter-gains}
\begin{aligned}
K^{\text{f}}_{i}(t)&= F_{i} \mathcal{D}_{i}(S(t)) L_{i}'
         \left( 
               L_{i} \mathcal{D}_{i}(S(t)) L_{i}' + \upsilon_i(t)H_{i}H_{i}' \right)^{-1}
\end{aligned}
\end{equation}
whenever $\upsilon_i(t)>0$ and $K^{\text{f}}_{i}(t)=0$ otherwise.
The duality relations between the filtering and control problems are now evident by direct 
comparison between \eqref{eq-ric-L} and \eqref{CAREF2}.
$F_i, L_i, G_i$ and $H_i$  are replaced with 
$A_i', B_i', C_i'$ and $D_i'$, respectively. Moreover, comparing the initial 
conditions of the \ver{coupled Riccati difference equations}, we see $\Sigma$ replaced with $E'E$. 
Also, we note that $P(0),P(1),\ldots,P(\ell)$ are 
equivalent to $S(\ell),S(\ell-1),\ldots, S(0)$, with a similar relation for the 
gains $K^{\text{f}}_i$ and $K^{\text{op}}_i$. The Markov chains driving the filtering and control systems are time-reversed one to each other. 

\begin{rema}\label{rema-extending-to-time-varying}
Time-varying parameters can be included both in standard MJLS and in $\Phi$ 
by augmenting the Markov state as to describe the pair $(\theta,t)$, $1\leq \theta\leq N$, 
$0\leq t\leq \ell$, and considering a suitable matrix $P$ of higher 
dimension $N\times (\ell+1)$. 
Although this reasoning leads to a matrix $P$ of high dimension, 
periodic and sparse, it is useful to make clear that our results are readily adaptable to plants 
whose matrices are in the form $A_{\theta(t)}(t)$. 
Either by this reasoning or by re-doing all computations given in this note 
for time-varying plants, 
we obtain the following generalization of \cite[Table 6.1]{Davis84}.
\begin{table}[h]
\centering
\begin{tabular}{ c  c }
\hline \\ \null \vspace{-.5 cm}
\\ \vspace{.1cm} FILTERING of MJLS & CONTROL of $\Phi$
\\\hline 
\\ \null \vspace{-.5 cm}
\\ \vspace{.08 cm}
$F_i(t)$ & $A'_i(t)$\\ \vspace{.1 cm}
$L_i(t)$ & $B'_i(t)$\\\vspace{.1 cm}
$G_i(t)$ & $C'_i(t)$\\\vspace{.1 cm}
$H_i(t)$ & $D'_i(t)$\\\vspace{.1 cm}
$K^{\text{f}}_i(t)$ & $K^{\text{op}\,\prime}_i(\ell-t)$\\\vspace{.1 cm}
$S_i(0)=\upsilon_i(0)\Sigma$&$P_i(\ell)=\pi_i(\ell)E'_iE_i$\\\vspace{.1 cm}
$S_i(t)$ & $P_i(\ell-t)$ \\\vspace{.1 cm}
$\eta_i(t)$ &$\theta_i(\ell-t)$\\
\hline\vspace{.03 cm}
\end{tabular}
\caption{Summary of the Filtering/Control Duality. $t=0,\ldots,\ell$.}
\label{Table}
\end{table}
\end{rema}
%
\section{Concluding remarks}\label{sec-conclusions}
We have presented an operator theory characterization of 
the conditional second moment $\mathfrak X$, 
an MS stability test and formulas for the optimal control of system $\Phi$. 
The results have exposed some interesting relations with standard MJLS.  
For system $\Phi$ it is fruitful to use the \emph{true} conditional second moment $\mathfrak X$ 
whereas for standard MJLS one has to resort to the variable $W$ given in \eqref{eq-def-W}
to obtain a recursive equation similar to the ones expressed in the Lemmas 3.1 and 4.1. 
Moreover, these classes of systems are equivalent if and only if the Markov chain 
is revertible, as indicated in Remark 1. The solution of the TRM-JLQ problem is given in Theorem 4.1 
in the form of a coupled Riccati equation that can be computed backwards 
prior to the system operation, as usual in linear quadratic problems for linear systems. 
The result beautifully extends the classic duality between filtering and control
into the relations expressed in Table 1. 
\comenta{It is perhaps surprising that such a direct duality would arise from a 
simple time reversion of the Markov chain, with no need for extra assumptions 
on the Markov chain, which is not necessarily ergodic nor time-revertible.}



\end{document}